\def\eqref#1{equation~\ref{#1}}
\def\1{\bm{1}}
\DeclareMathAlphabet{\mathsfit}{\encodingdefault}{\sfdefault}{m}{sl}
\SetMathAlphabet{\mathsfit}{bold}{\encodingdefault}{\sfdefault}{bx}{n}
\title{Design of Ligand-Binding Proteins with Atomic Flow Matching}
\newcommand{\method}{\textsc{AtomFlow}\xspace}
\newcommand{\rfdiffusionaa}{\text{RFDiffusionAA}\xspace}
\newcommand{\rosettafoldaa}{\text{RoseTTAFoldAA}\xspace}
\newcommand{\rfdiffusion}{\text{RFDiffusion}\xspace}
\newtheorem{proposition}{Proposition}
\author{%
  Junqi Liu$^{1,3}$,
  Shaoning Li$^{2,3}$,
  Chence Shi$^{3,4,5}$,
  Zhi Yang$^{1,}$\footnotemark[2]~,
  Jian Tang$^{4,6,7, 3,}$\footnotemark[2] \\
  $^1$ School of Computer Science, Peking University $^2$ The Chinese University of Hong Kong \\
  $^3$ BioGeometry
  $^4$ Mila - Qu\'ebec AI Institute
  $^5$ Universit\'e de Montr\'eal 
  $^6$ HEC Montr\'eal \\
  $^7$ CIFAR AI Research Chair \\
  \texttt{\{liujunqi,yangzhi\}@pku.edu.cn}\\
  \texttt{jian.tang@hec.ca} 
}
\begin{document}
\sloppy

\maketitle

\renewcommand{\thefootnote}{}

\footnotetext[1]{
$^\dagger$~Corresponding authors.
}
\footnotetext[2]{
Work in progress, conducted during the internships of the first two authors at BioGeometry.
}
\renewcommand{\thefootnote}{\arabic{footnote}}
\vspace{-10pt}
\begin{abstract}
\vspace{-5pt}
Designing novel proteins that bind to small molecules is a long-standing challenge in computational biology, with applications in developing catalysts, biosensors, and more.
Current computational methods rely on the assumption that the binding pose of the target molecule is known, which is not always feasible, as conformations of novel targets are often unknown and tend to change upon binding.
In this work, we formulate proteins and molecules as unified biotokens, and present \method, a novel deep generative model under the flow-matching framework for the design of ligand-binding proteins from the 2D target molecular graph alone.
Operating on representative atoms of biotokens, \method captures the flexibility of ligands and generates ligand conformations and protein backbone structures iteratively.
We consider the multi-scale nature of biotokens and demonstrate that \method can be effectively trained on a subset of structures from the Protein Data Bank, by matching flow vector field using an SE(3) equivariant structure prediction network.
Experimental results show that our method can generate high fidelity ligand-binding proteins and achieve performance comparable to the state-of-the-art model \rfdiffusionaa, while not requiring bound ligand structures.
As a general framework, \method holds the potential to be applied to various biomolecule generation tasks in the future.

\end{abstract}

\vspace{-8pt}
\section{Introduction}
\label{sec:intro}

\vspace{-5pt}
Proteins are indispensable macromolecules that drive the essential processes of living organisms.
A crucial mechanism by which they accomplish this is through binding with small molecules~\citep{schreierComputationalDesignLigand2009}.
Continuous progress has been made to design ligand-binding proteins with various biological functions, such as catalysts and biosensors~\citep{bennettImprovingNovoProtein2023}.
However, the problem remains challenging due to the complex interactions between proteins and molecules, as well as the inherent flexibility of ligands.
The most well-established approaches depend on shape complementarity to dock molecules onto native protein scaffold structures~\citep{bick2017computational, polizzi2020defined}, which are computationally expensive.

Recently, \rfdiffusionaa~\citep{krishna2024generalized}, a de novo protein design method based on the all-atom structure prediction model \rosettafoldaa~\citep{krishna2024generalized}, has shown remarkable performance in designing novel ligand-binding proteins for small molecules.
This method explicitly captures the interactions between proteins and molecules, achieving superior performance compared to its predecessor \rfdiffusion~\citep{watson2023novo}, which can only model interactions between amino acid residues.
Despite their great potential for ligand-binding protein design, current approaches assume that the bound conformation of the target molecule is known and rigid.
However, the binding pose of the target molecule is not always available, especially for molecules that do not bind to any known natural proteins~\citep{bick2017computational}.
While it is possible to mitigate this limitation by sampling a diverse set of conformers and subsequently filtering them using expert knowledge~\citep{krishna2024generalized}, this approach demands potentially prohibitive computational resources.
Additionally, the constraint of ligand rigidity is suboptimal, as ligands often undergo significant conformation changes upon binding with proteins~\citep{mobley2009binding}.
We illustrate this phenomenon in Figure.\ref{fig:ligand_deform}.
Some pioneering efforts have been made to account for ligand flexibility~\citep{zhang2024pocketgen, stark2024harmonic}, however, these methods can only design the portions of proteins that directly interact with the ligands and require the rest part of the proteins as input.

\begin{wrapfigure}{r}{.5\textwidth}
    \vspace{-11pt}
    \begin{flushright}
    \includegraphics[width=0.5\textwidth]{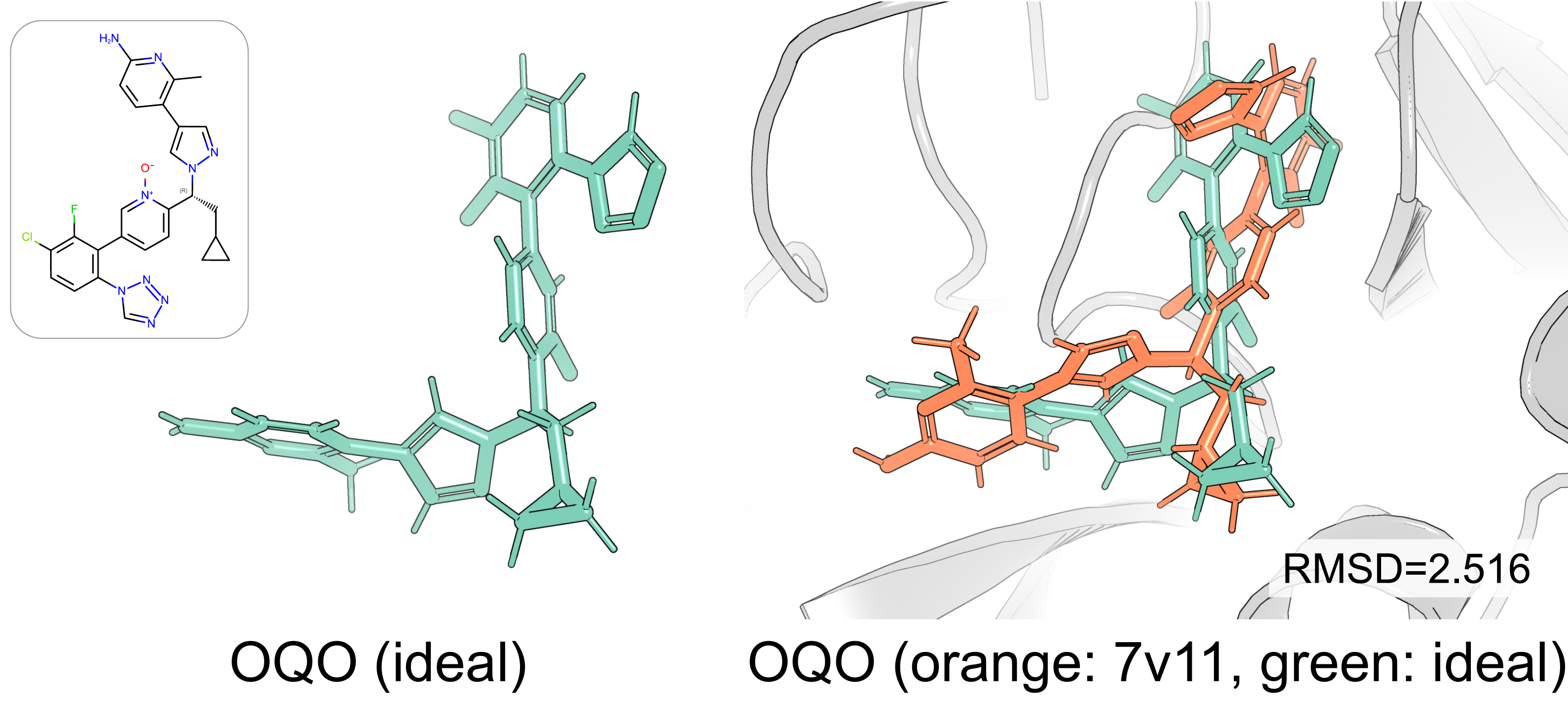}
    \end{flushright}
    \vspace{-8pt}
    \caption{The conformer of OQO deforms upon binding to coagulation factor XIa. Green: ideal conformer. Orange: bound conformer.}
    \label{fig:ligand_deform}
    \vspace{-13pt}
\end{wrapfigure}

To address the aforementioned issues, we present \underline{Atom}ic \underline{Flow}-matching (\method),
a novel deep generative model grounded in the
flow-matching framework~\citep{lipman2022flow} for the design of ligand-binding proteins from 2D molecular graphs alone.
\method considers ligand flexibility and can iteratively generate ligand conformations and bound protein backbone structures. Instead of relying on a fixed ligand conformer, \method learns to update the ligand structure along with the structure of the protein binder.
Inspired by recent advances in all-atom structure modeling~\citep{krishna2024generalized, abramson2024accurate}, we conceptualize proteins and molecules as biotokens with representative atoms, which are associated with various type-specific attributes and can be modeled by a single, unified network.
This approach maximizes the information aggregation between different molecular types, while encouraging the model to focus on the key interaction patterns.
Following the rectified flow approach~\citep{liu2022flow} for generative modelling, we define a flow on the representative atoms as a linear interpolation between the bound protein-ligand complex structures and noisy structures.
We demonstrate that, with minor approximations, the vector field of the defined flow can be effectively learned using an SE(3)-equivariant structure prediction module and a variant of Frame Aligned Point Error (FAPE) loss~\citep{jumper2021highly} that compensates for the multi-scale nature of their geometric features\footnote{The size of a protein is often much larger than that of a molecule. The size disparity should be considered when designing flow-matching models for stable training and inference.}.
After training, protein-ligand complex structures can be sampled from the approximated vector field, which iteratively transforms and refines noisy structures based on 2D molecular graphs. 
The idea of regressing the vector field using a structure prediction module is also explored in a concurrent work~\citep{jing2024alphafold}, but their focus is on protein structure prediction.
Notably, as a general generative model operating on biotokens, \method is versatile for different molecular types and has the potential to be applied to various biomolecule generation tasks.

We follow the \emph{in silico} evaluation pipeline of the state-of-the-art method \rfdiffusionaa, evaluating \method on several key metrics including self-consistency, binding affinity, diversity and novelty. \method matches the overall performance of \rfdiffusionaa and demonstrates advantages in various situations. An ablation study further highlights that when the bound structure is unknown, \method successfully designs protein binders with high binding affinity, whereas \rfdiffusionaa can be constrained by its dependence on a fixed, suboptimal ligand structure.
\vspace{-5pt}
\section{Related Work}
\label{sec:related}
\vspace{-3pt}

\textbf{Ligand-binding Protein Design.}
Traditional approaches to ligand-binding protein design mainly rely on docking molecules onto large sets of shape-complementary protein pockets~\citep{polizzi2020defined, lu2024novo}. While the screening process can be accelerated with deep learning models~\citep{an2023denovo}, conventional methods are computationally expensive and often depend on domain experts~\citep{bick2017computational}.
Recent advances in deep generative models have paved the way for data-driven approaches, and a variety of models have been proposed to design proteins conditioned on binding targets~\citep{shi2022protein, kong2023end, watson2023novo, zhang2024pocketgen}.
Focusing on molecule binder design, \rfdiffusion~\citep{watson2023novo} generates novel proteins from scratch, using a heuristic attractive-repulsive potential to measure shape complementarity.
The follow-up work \rfdiffusionaa~\citep{krishna2024generalized} improves the performance by explicitly modeling the interactions between proteins and molecules with an all-atom formulation.
These approaches assume binding poses of ligands are known and impose rigidity constraints on ligand structures.
Another line of research focuses on designing binding pockets for small molecules~\citep{stark2024harmonic, zhang2024pocketgen}. While taking ligand flexibility into consideration, they can only design the portions of proteins that interact with the ligands and require the rest part of the proteins as input.
Our model also accounts for the ligand flexibility, but is able to design full ligand-binding proteins from 2D molecular graph alone.

\textbf{Protein Generative Model and Structure Prediction.}
Recently, various deep generative models for protein generation have emerged~\citep{ingraham2023illuminating, lin2023generating, yim2023se, yim2023fast, wu2024protein, watson2023novo, krishna2024generalized}.
For example, Genie~\citep{lin2023generating} introduces a diffusion process defined on C$\alpha$ coordinates of proteins and allows for the incorporation of motif structures as conditions.
FrameDiff~\citep{yim2023se} takes a step further by generating novel protein backbone structures using an SE(3) diffusion process applied to residue frames.
Its successor, FrameFlow~\citep{yim2023fast}, accelerates the generation process by leveraging the flow-matching framework.
However, these approaches are tailored for single-chain protein generation and fall short in modeling multiple biomolecules.
In contrast, we treat multiple biomolecules, e.g., proteins and molecules, as biotokens and define a novel flow-matching model on their representative atoms.
This allows us to design ligand-binding proteins based solely on molecular graphs, effectively capturing the flexibility of biomolecules and the intricate interactions between them.
Our work is also related to approaches that perform protein structure predictions within the all-atom framework, such as \rosettafoldaa~\citep{krishna2024generalized} and AlphaFold 3~\citep{abramson2024accurate}.
These methods tokenize various types of biomolecules into unified tokens, aiming to develop a universal structure prediction model for all molecular types presented in the Protein Data Bank.
Our \method adopts the same practice, and we believe this formulation can maximize the information flow between proteins and molecules~\citep{bryant2024structure}, while our structural modelling on the representative atoms encourages the model to focus on the key patterns of biointeractions.
\vspace{-5pt}
\section{Preliminaries}
\label{sec:preliminary}
\vspace{-5pt}
\subsection{Notations and Problem Formulation}

\textbf{Notations.}~In this work, a protein-ligand complex is represented as a series of $N$ biotokens
$\{ a_i \mid a_i = (s_i, x_i), i = 1, 2, \dots, N \}$, where each token $a_i$ corresponds 
to either a protein residue or a ligand atom, $s_i$ denotes the token type, and $x_i \in \mathbb{R}^{3}$ 
denotes the token position, i.e. the coordinate of its representative atom. Let $\mathcal{S}_\text{protein}$ and $\mathcal{S}_\text{atom}$ be the set of amino acid types and chemical elements, respectively. For protein residues, $s_i \in \mathcal{S}_{\text{protein}}$, with $x_i$ being the position of the C-$\alpha$ carbon. 
 For ligand atoms, $s_i \in \mathcal{S}_{\text{atom}}$, 
with $x_i$ being the atomic position. We define the protein token set as 
$\mathcal{P} = \{a_i \mid s_i \in \mathcal{S}_{\text{protein}}\}$, with $N_p = |\mathcal{P}|$ 
being the number of protein residues, and the ligand token set as 
$\mathcal{M} = \{a_i \mid s_i \in \mathcal{S}_{\text{atom}}\}$, with $N_m = |\mathcal{M}|$ 
representing the number of ligand atoms. In our settings, $N=N_p+N_m$. The biotokens are attributed with token-level features $f^\text{token} \in \mathbb{R}^{N \times c_t}$ and pair-level features $f^\text{pair} \in \mathbb{R}^{N \times N \times c_p}$, where $c_t$ and $c_p$ denote the feature dimensions.

\textbf{Problem Formulation.}~Given a ligand molecule represented as a chemical graph $\mathcal{G} = (\mathcal{V}, \mathcal{E})$ and a residue count $N_p$ for the protein binder to be designed, we aim to generate a protein-ligand complex, where a conformer of $\mathcal{G}$ is docked to a protein binder with $N_p$ residues. Specifically, by describing the target protein-ligand complex as a series of biotokens, we generate the token positions $\{ x_i \}$, with $\mathbf{x}_m=\{ x_i \mid a_i \in \mathcal{M} \}$ being a valid conformer for $\mathcal{G}$, and $\mathbf{x}_p=\{ x_i \mid a_i \in \mathcal{P} \}$ being a protein binder with high binding affinity to $\mathbf{x}_m$. Following previous works \citep{krishna2024generalized, yim2023se}, we additionally generate the token frames $\{ T_i=(r_i, t_i) \mid a_i \in \mathcal{P}\}$ for protein tokens as described in Appendix~\ref{appx:frame}, which can be used to recover full backbone coordinates of residues. The design of residue types $\{ s_i \mid a_i \in \mathcal{P} \}$ is delegated to an existing reverse folding model~\citep{dauparas2023atomic}.

\vspace{-5pt}
\subsection{Flow Matching}

Building upon the significant success of diffusion models in various generative tasks, flow matching models~\citep{albergo2022building, liu2022flow} allows for faster and more reliable sampling from a distribution learnt from data. The generative process of flow matching models is usually defined by a probability path $p_t({x}), t\in [0, 1]$ that gradually transforms from a known noisy distribution $p_0({x})=q(x)$, such as $\mathcal{N}({x}|0, I)$ for ${x}\in\mathbb{R}$, to an approximate data distribution $p_1\approx p_\mathrm{data}(x)$. A vector field $u_t({x})$, which leads to an ODE $\frac{\mathrm{d}\phi_t(\mathbf{x})}{\mathrm{d}t} = u_t(\phi_t(\mathbf{x}))$, is used to generate the probability path via the push-forward equation,
\begin{equation}
    p_t = [\phi_t]_* p_0 = p_0(\phi_t^{-1}(x))\mathrm{det}\left[\frac{\partial \phi_t^{-1}}{\partial x}(x)\right],
\end{equation}
which could be approximated with a trainable network $\hat{v}_t(x;\theta)$.

Due to the complexity of defining an appropriate $p_t$ and $u_t$ , we could alternatively define a conditional probability path $p_t(x|x_1)$, which is usually derived through a conditional vector field $u_t(x|x_1)$ for each data point $x_1$~\citep{lipman2022flow}. The conditional vector field is then approximated with a trainable network $\hat{v}_t(x;\theta)$. \citet{lipman2022flow} has proved that the conditional flow matching loss,

\begin{equation}
\label{equation:cfm}
    \mathcal{L}_{\mathrm{CFM}}(\theta) = \mathbb{E}_{t, p_{\text{data}}(x_1), p_t(x|x_1)}\|\hat{v}_t(x;\theta) - u_t(x|x_1)\|,
\end{equation}

has identical gradients w.r.t. $\theta$ with $\mathcal{L}_\mathrm{FM}=\mathbb{E}_{t, p_\mathrm{data}(x)}||\hat{v}_t(x;\theta)-u_t(x)||$, which means the model can generate a marginal vector field by simply learning from the $x_1$-conditioned vector fields, without access to $p_t(x)$ and $u_t(x)$. After training, a neural ODE is obtained, ready for sampling from $p_0$ to $p_t$ by an ODE solver~\citep{jardine2011euler} .
\vspace{-5pt}
\section{Method}
\label{sec:method}

\method adopts a unified biotoken representation to generate the protein binder and ligand structure by learning the joint distribution of the token positions conditioned on a ligand chemical graph $\mathcal{G}$, $p(\{ x_i \}|\mathcal{G})$, from known structures of proteins and protein-ligand complexes. To achieve this, we define a rectified flow on the space of all token positions $\mathbf{x}\in \mathbb{R}^{N\times 3}$, and the corresponding vector field is approximated with an SE(3)-equivariant structure prediction module. The structure predicted at the last generation step is adopted as the final result. In this section, we introduce the flow matching model in Section 4.1, the biotoken feature representation in Section 4.2, the structure prediction module in Section 4.3, and the training and inference procedures in Section 4.4. The overview of our method is illustrated in Figure~\ref{fig:overview}.

\begin{figure}[ht]
\makebox[\textwidth][c] {
\includegraphics[width=\textwidth]{./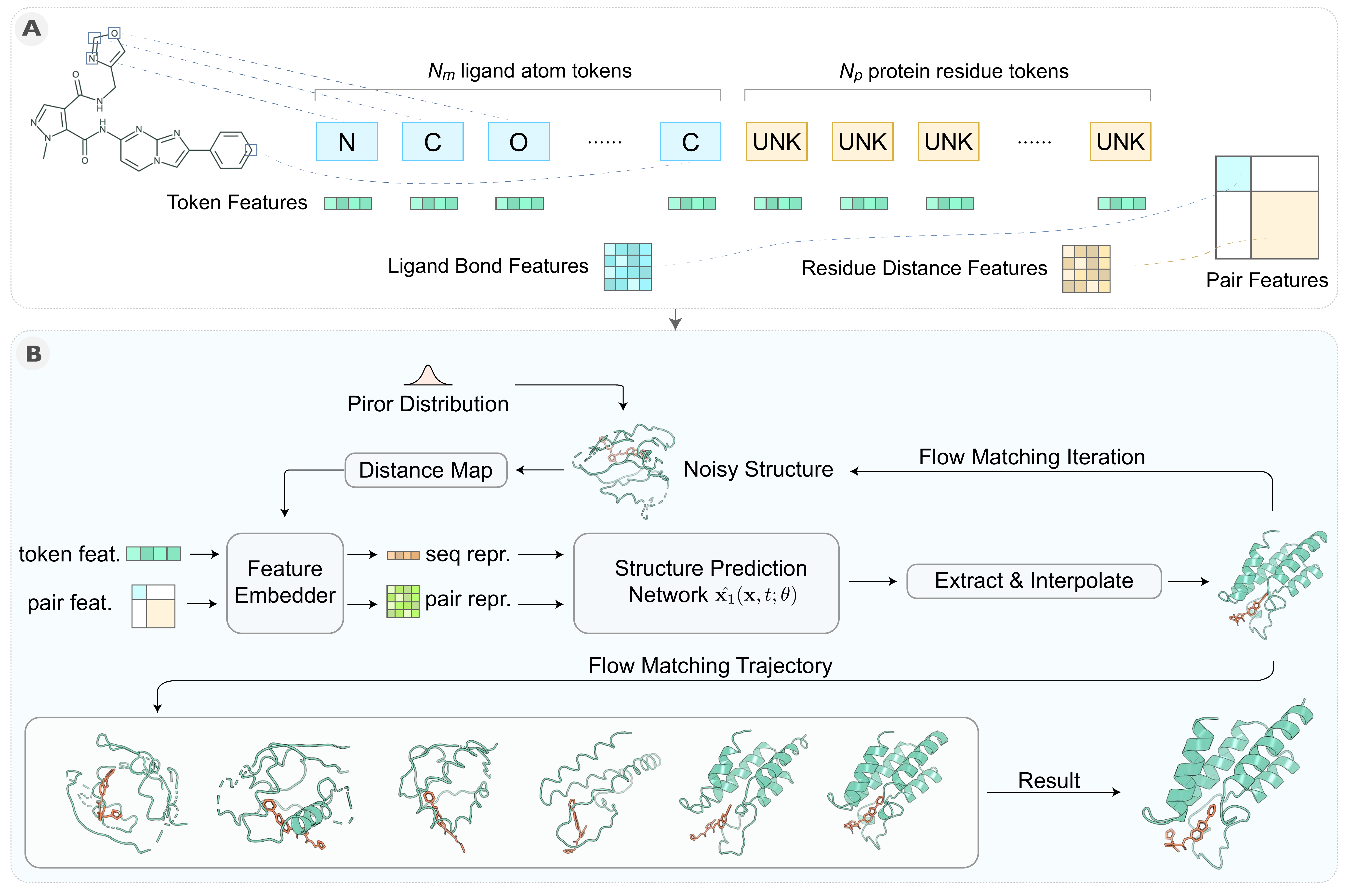}
}
\centering
\caption{The inference process of \method. We represent the protein-ligand complex as a series of biotokens and embed their token and pair level features. Starting from a noisy sample, the flow matching procedure gradually generates the designed structure $x_1$ with a structure prediction network. }
\label{fig:overview}
\end{figure}
\vspace{-10pt}

\subsection{Flow Matching for Protein-Ligand Complex Generation}

We jointly design the complex structure $\mathbf{x}=\mathbf{x}_m \cup \mathbf{x}_p$, which lies in the space of $\mathbb{R}^{N\times 3}$, with a flow matching model. Considering that different structures obtained under arbitrary SE(3) transformations correspond to the same complex, we treat each structure as an element in the quotient space $\mathcal{Q}: \mathbb{R}^{N \times 3} / \text{SE}(3)$, where two structures are identical if they could be perfectly aligned with an SE(3) transformation~\citep{jing2024alphafold}. This quotient space is proved to be a Riemannian manifold when defined with suitable care~\citep{diepeveen2024riemannian}. 

Following Riemannian Flow Matching~\citep{chen2024flow}, we define a rectified flow on this manifold with a premetric $d: \mathcal{Q}\times \mathcal{Q}\rightarrow \mathbb{R}$. We denote $\text{align}_x(y)$ for $x,y \in \mathbb{R}^{N\times 3}$ as aligning structure $y$ to $x$ to minimize RMSD, then the premetric $d(x,y)$ could be defined as the minimum point-wise root mean square deviation (RMSD) among all pairs of possible structures in the original space $\mathbb{R}^{N\times 3}$ for two elements in the quotient space
\begin{equation}
\label{equation:premetric}
    d(x,y) = \min_{\tau \in \mathrm{SE}(3)} \text{RMSD}\,(\tau(y), x) = \text{RMSD}\,( \text{align}_x(y) - x )
\end{equation}
\begin{proposition}
\label{prop:premetric}
The premetric in equation \ref{equation:premetric} is a qualified premetric on $Q$.
\end{proposition}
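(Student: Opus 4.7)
The plan is to verify the three axioms of a premetric in the sense of \citet{chen2024flow}: (i) non-negativity, (ii) $d(x,y)=0$ iff $x=y$ in $\mathcal{Q}$, and (iii) the gradient $\nabla_x d(x,y)$ is non-zero on $\mathcal{Q}$ whenever $x\neq y$. As a preliminary step, I would check that $d$ descends to a well-defined function on $\mathcal{Q}\times\mathcal{Q}$: since RMSD is invariant under the diagonal $\mathrm{SE}(3)$ action, the substitution $\tau'=\tau_1^{-1}\tau\tau_2$ gives $\min_\tau \text{RMSD}(\tau\tau_2 y,\tau_1 x) = \min_{\tau'} \text{RMSD}(\tau' y,x)$, so the value is independent of the chosen representatives.

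Non-negativity is inherited from RMSD. To ensure the infimum is actually attained, so that $\mathrm{align}_x(y)$ makes sense pointwise, I would invoke the classical Kabsch--Umeyama construction, which solves the rigid alignment problem in closed form via centroid matching and an SVD of the centered cross-covariance matrix; compactness of $\mathrm{SO}(3)$ together with continuity then guarantees existence of a minimizer $\tau^{*}$. For positivity, $[x]=[y]$ immediately gives some $\tau$ with $\tau y=x$ pointwise, so $d(x,y)=0$; conversely $d(x,y)=0$ forces the attained $\tau^*$ to satisfy $\tau^* y=x$ pointwise, hence $[x]=[y]$.

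The step I expect to be the main obstacle is non-degeneracy of the gradient on the quotient. By the envelope theorem applied to the inner minimization over $\mathrm{SE}(3)$, $\nabla_x d(x,y)$ is proportional to $x-\mathrm{align}_x(y)$. Kabsch's first-order optimality conditions force this residual to be orthogonal to the six infinitesimal generators of the $\mathrm{SE}(3)$ action at $x$: centroid matching annihilates the three translation directions, while symmetry of the centered cross-covariance matrix (the first-order condition for the SVD-optimal rotation) annihilates the three rotation directions. Consequently the residual is horizontal for the principal bundle $\mathbb{R}^{N\times 3}\to\mathcal{Q}$ and descends to a well-defined tangent vector at $[x]\in\mathcal{Q}$, which vanishes exactly when $\mathrm{align}_x(y)=x$, i.e.\ when $[x]=[y]$ by the positivity step. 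The technical burden here is the bundle-theoretic bookkeeping that identifies the horizontal residuals with $T_{[x]}\mathcal{Q}$ and certifies that a non-zero residual in $\mathbb{R}^{N\times 3}$ projects to a non-zero tangent vector on $\mathcal{Q}$; once this geometric identification is in place, all three premetric axioms follow.
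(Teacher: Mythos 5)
Your proof follows the same route as the paper's: verify non-negativity, the identity-of-indiscernibles condition via the attained aligner, and non-degeneracy of the gradient by showing that $\nabla_x d(x,y)$ is the normalized residual $(\text{align}_x(y)-x)/\|\text{align}_x(y)-x\|$. The paper's version simply differentiates $\|\text{align}_x(y)-x\|$ with the aligner held fixed and stops there, so the points you flag as the technical burden --- well-definedness on the quotient, existence of the Kabsch minimizer, the envelope-theorem justification for ignoring the dependence of $\tau^*$ on $x$, and the horizontality argument that lets the residual be read as a non-zero tangent vector on $\mathcal{Q}$ --- are all left implicit in the paper, and your filling them in is added rigor rather than a different approach.
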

With such premetric at hand, we could obtain a well-defined conditional vector field that decreases the premetric linearly from the prior distribution to the data distribution
\begin{equation}
\label{equation:cvf}
    u_t(x|x_1)=\dfrac{1}{1-t}\left(\text{align}_x(x_1)-x\right).
\end{equation}
We leave the proof of Proposition~\ref{prop:premetric} and the derivation of equation \ref{equation:cvf} to Appendix~\ref{appx:flow}.  Since the vector field is defined as a function of $\mathbf{x}_1$, we could learn the vector field with a structure prediction model $\hat{\mathbf{x}}_1(\mathbf{x},t;\theta)$. By substituting equations \ref{equation:cvf} into equation \ref{equation:cfm}, we obtain the training loss
\begin{equation}
\label{equation:cfm.final}
    \mathcal{L}_{\mathrm{CFM}}(\theta) = \mathbb{E}_{t, p_{\text{data}}(\mathbf{x}_1), p_t(\mathbf{x}|\mathbf{x}_1)}\left\|\dfrac{1}{1-t}(\text{align}_{\mathbf{x}}(\hat{\mathbf{x}}_1(x,t;\theta))-\text{align}_\mathbf{x}(\mathbf{x}_1))\right\|,
\end{equation}
This loss calculates the $(1-t)$-normalized distance between predicted $\hat{\mathbf{x}}_1$ and $\mathbf{x}_1$ in the data distribution aligned to the noisy structure of current step, which is SE(3)-equivariant to both the predicted and ground truth structure. The structure module is designed to predict the token frames (Section~\ref{sec:preliminary}), while the token positions are extracted from them during the generation process. The last prediction output is adopted as the final result.

Defining a unified flow matching procedure on the joint distribution enables the model to directly learn the structure characteristics that leads to a tightly binded complex, as well as the conformation deformation of both the proteins and the ligands, which is essential to designing a satisfactory ligand-binding protein. 

\vspace{-5pt}
\subsection{Representation of Conditional Features}
\label{subsec:representation}

The generation process of \method is conditioned on the ligand chemical graph $\mathcal{G}$ and a designated protein length $N_p$. We model such conditions as an additional condition to the vector field $u$. As a result, the inputs of the prediction network $\hat{\mathbf{x}}_1$ is augmented to accept conditional features. With the biotoken representation, we embed all such features as $f^\text{token}$ and $f^\text{pair}$ as illustrated in Figure~\ref{fig:overview}A.

For a ligand chemical graph $\mathcal{G}$, we embed the chemical element, as well as other known chemical properties as $f^\text{token}$ of ligand tokens. The chemical bonds $\mathcal{E}$ are embeded in $f^\text{pair}$ as a multi-dimensional adjacency tensor, each dimension representing a bond type. We also embed the relative residue position~\citep{shaw2018selfattention} as a pair feature, while the residue tokens may also be attributed with other known conditions. We concatenate the protein and ligand features to form a unified feature tensor, eliminating the need to distinguish different types of token when processing the features.

\vspace{-5pt}
\subsection{Structure Prediction Network}

The structure prediction network $\hat{\mathbf{x}}_1(\mathbf{x},t;\theta)$ \footnote{ Though $\hat{\mathbf{x}}_1$ is a function of $\mathbf{x},t,f^\text{token},f^{feat}$, we omitted certain parameters to simplify the text. } predicts the token frames $\{ T_i \}$, which can be used to extract token positions $\mathbf{x}_1$, given a series of noisy positions $\mathbf{x}$ at timestamp $t$. It encodes $\mathbf{x}$, along with $f^\text{token}$ and $f^\text{pair}$, with an SE(3) invariant encoding module, processing the representation with a transformer stack, and generates the predicted structure with a structure module based on invariant-point attention (IPA)~\citep{jumper2021highly}, as illustrated in Figure~\ref{fig:overview}B. The network jointly processes two kinds of biotokens, protein residues and ligand atoms, with different spatial scales, and handles such difference with special care.

\textbf{Distance Map.}~The input coordinates $\mathbf{x}$ are encoded by projecting the one-hot binned distance map between input coordinates for each token pair to the feature space
\begin{equation}
    t_{i,j}=\text{Linear}(\text{BinRepr}(\|\mathbf{x}^{(i)}-\mathbf{x}^{(j)}\|)),
\end{equation}
where the bins are not divided equally considering different precision requirement between residues and atoms. This representation is SE(3) invariant, since the internal distance does not change under rigid transformation. \footnote{ To accommodate the precision differences between ligands and proteins, the bin intervals are dense between 1Å (approximate length of a chemical bond) and 3.25Å (approximate distance between adjacent amino acids) and sparser beyond 3.25Å. }

\textbf{Feature Embedder.}~The feature embedder generates a single representation $s\in{\mathbb{R}}^{N\times c_s}$ and pair representation $z\in{\mathbb{R}}^{N\times N\times c_z}$ from distance map $h$, noise level $t$, $f^\text{token}$ and $f^\text{pair}$ for the following steps. The noise level is encoded with Gaussian Fourier embedding~\citep{song2021scorebased}. The local features are concatenated and projected to single representation $s$ and pair representation $z$, $s_i = \text{Linear}(f^{\text{local}}_i)$.
The pair features and input encoding are projected and added to $z$
\begin{equation}
    z_{i,j} = \text{Linear}(f^{\text{local}}_i) + \text{Linear}(f^{\text{local}}_j) + f^{\text{pair}}_{i,j} + t_{i,j}.
\end{equation}
As described in Section~\ref{subsec:representation}, different token types can be treated the same and processed uniformly.

\textbf{Structure Module.}~The structure module generates a predicted complex structure, represented as a series of token frames $T^{N}$. For ligand atoms, the rotation of the predicted frame is always identity rotation, while the translation equals to its position. It first processes $z$ through a deep transformer stack (Appendix~\ref{appx:model}) to obtain a denoised pair representation $z'$, and converts $s$ and $z'$ to $T^N$ through a series of shared-weight IPA block
\begin{equation}
    T_{1\cdots N} = \text{IPAStack}(s_{1\cdots N}, \text{TransformerStack}(z_{1\cdots N, 1\cdots N})).
\end{equation}
The IPA stack outputs a sequence of frames for each token, while the rotations for atom tokens are dropped and replaced with the atom frame demonstrated in Section~\ref{subsec:representation}. The final output represents the full complex structure $\hat{\mathbf{x}}$, while token positions $\hat{x_1}$ is calculated as previously described. The Transformer stack on the unified token sequence allows us to smoothly model the interactions between different types of biological entities in a joint feature space, while the IPA blocks are proved to be efficient when the final structure is properly embedded in the transformer output~\citep{jumper2021highly}.

\textbf{Auxiliary Head.}~We add an auxiliary head to predict the pairwise binned distance from the denoised pair representation $z'$, $h_i=\text{softmax}(\text{Linear}(z'_i))$, which directly supervise the input of structure module and has been proved to be helpful during training~\citep{jumper2021highly}. The bins are also unevenly divided to accommodate the multi-scale characteristics of the predicted complex.

\vspace{-5pt}
\subsection{Training and Inference}

We train the network $\hat{\mathbf{x}}_1$ by sampling data points and timestamps, calculating the noisy input, and supervising the predicted results. At inference time, we transforms the token positions sampled from the prior distribution through the predicted vector field with an ODE solver, and outputs the structure we obtained at the final step.

\textbf{Loss.}~We supervise the predicted complex structure $T$ with a metric that measures the structural difference between the observed structure and the predicted structure. Preliminary experiments show that the $\mathcal{L}_\text{CFM}$ in equation~\ref{equation:cfm.final} leads to a fluctuating training trajectory since the aligning object $\mathbf{x}$ varies upon training. With approximation (Appendix~\ref{appx:model}), we replace the loss function to a variant of the widely-adopted FAPE function~\citep{jumper2021highly},

\begin{equation}
\label{equation:cfm.fape}
    \mathcal{L}_{\text{CFM-FAPE}}(\theta) = \mathbb{E}_{t, p_{\text{data}}(\mathbf{x}_1), p_t(\mathbf{x}|\mathbf{x}_1)}\left[\dfrac{1}{1-t}\mathrm{FAPE}(\hat{\mathbf{x}}_1(\mathbf{x},t;\theta), \mathbf{x}_1)\right].
\end{equation}
We show that this substitution does not change the training objective in the appendix. Since the normalization factor $Z$ in the FAPE loss is related to the numerical range of distance, we divide the FAPE loss into protein-protein interaction, protein-ligand interaction, ligand-ligand interaction, and assign different $Z$s for the three parts. For the auxiliary head, we adopt the cross-entropy loss averaged over all token pairs for the predicted distance. The final training loss 
\begin{equation}
\mathcal{L}=\alpha_1\mathcal{L}_\text{CFM-FAPE-pp}+\alpha_2\mathcal{L}_\text{CFM-FAPE-pl}+\alpha_3\mathcal{L}_\text{CFM-FAPE-ll}+\alpha_4\mathcal{L}_\text{aux}.
\end{equation}
Further details are elaborated in Appendix~\ref{appx:model}.

\textbf{Training.}~We sample the timestamp $t$ from the logit normal distribution, assigning more weight on intermediate steps, which helps the model to achieve better performance on hard timestamps~\citep{esser2024scaling, karras2022elucidating}. The prior distribution $q(x)$ is selected as $\mathcal{N}(0, \sigma_\text{data})$, where $\sigma_\text{data}=10$. The input $\mathbf{x}$ is given by interpolating the data point and a sample from the prior distribution. The training procedure is shown in Algorithm~\ref{algo:train}.

\vspace{-10pt}
\begin{figure}[h]
\centering
\begin{minipage}{0.48\textwidth}
\begin{algorithm}[H]
\caption{Training}
\label{algo:train}
\begin{algorithmic}[1]
    \REQUIRE data distribution $p(\mathbf{x})$, prior distribution $q(\mathbf{x})$, trainable model parameters $\theta$
    \WHILE{not converged}
        \STATE sample complex structure $\mathbf{x}_1$ and its corresponding ligand chemical graph $\mathcal{G}$ from $p(\mathbf{x}), t \sim [0,1), \mathbf{x}_0 \sim q(\mathbf{x})$
        \STATE $N, f^\text{token}, f^\text{pair} \leftarrow \text{Embedder}(\mathcal{G}, N_p)$
        \STATE $\mathbf{x}_t \leftarrow t\cdot \mathbf{x}_1 + (1-t)\cdot \text{align}_{\mathbf{x}}(\mathbf{x}_0)$
        \STATE $\theta \leftarrow \text{Optimizer}(\theta, (\mathbf{x}_t, f^\text{token}, f^\text{pair}, t), \mathcal{L})$
    \ENDWHILE
    \RETURN $\theta$
\end{algorithmic}
\end{algorithm}
\end{minipage}
\hfill
\begin{minipage}{0.48\textwidth}
\begin{algorithm}[H]
\caption{Inference}
\label{algo:infer}
\begin{algorithmic}[1]
    \REQUIRE Chemical graph $\mathcal{G}$, residue count $N_p$, scheduler $t_{0\cdots m}$, prior distribution $q(\mathbf{x})$, model parameters $\theta$
    \STATE $N, f^\text{token}, f^\text{pair} \leftarrow \text{Embedder}(\mathcal{G}, N_p)$
    \STATE sample token positions $\mathbf{x}_{t_0}\sim q(\mathbf{x})$
    \FOR{$i=0$ to $m-1$}
        \STATE $T_{1\cdots N} \leftarrow \hat{\mathbf{x}}_1(\mathbf{x}_{t_i}, f^\text{token}, f^\text{pair}, t_i;\theta)$ 
        \STATE $\hat{\mathbf{x}}_1 \leftarrow \text{Extract}(T)$
        \STATE calculate $\mathbf{x}_{t_i+1}$ as Equation~\ref{equation:infer}
    \ENDFOR
    \RETURN $T$
\end{algorithmic}
\end{algorithm}
\end{minipage}

\end{figure}

\textbf{Inference.}~A scheduler of noise levels $\{ t_i \}_{i=0}^{m},  t_0=0, t_m=1$ is used to determine the noise level $t_i$ of each sampling step $x_{t_i}$. Starting from a noisy sample $x_{t_i}=x_0$ as the initial model input, the structure prediction network predicts the vector field, which gives $x_{t_{i+1}}$ with the Euler's Method, i.e. 
\begin{equation}
\label{equation:infer}
    \mathbf{x}_{t_{i+1}} = \mathbf{x}_{t_i} + \dfrac{t_{i+1}-t_i}{1-t_i}\bigg(\text{align}_{\mathbf{x}_{t_i}}\Big(\text{Extract}\big(\hat{\mathbf{x}}_1(\mathbf{x}_{t_i},t_i;\theta)\big)\Big)-\mathbf{x}_{t_i}\bigg),
\end{equation}
where the Extract function extracts the token positions from the predicted token frames. The model output at the last step is adopted as the final result. The inference procedure is shown in Algorithm~\ref{algo:infer}.

\vspace{-5pt}
\section{Experiments}
\label{sec:experiment}

Following previous protein design models~\citep{yim2023fast,lin2023generating,watson2023novo} and binder design models~\citep{krishna2024generalized}, we evaluate \method through in silico experiments on key metrics of our generated binder including self-consistency, binding affinity, diversity and novelty.
\vspace{-5pt}
\subsection{Experiment Setup}

\textbf{Training Data.}~We train the denoising model on two datasets: PDBBind~\citep{liu2017forging}, a protein-ligand conformer dataset derived from the Protetin Data Bank (PDB)~\citep{berman2000protein}, and SCOPe~\citep{chandonia2022scope}, a structure categorical dataset for protein. The model is firsted trained on solely generating the protein structure for 40k steps, and then finetuned on co-generating both of the protein and ligand structure for 30k steps. 

\textbf{Baseline and Model Variant.}~We compare \method with the state-of-the-art binder generation method \rfdiffusionaa~\citep{krishna2024generalized}, which is extensive trained on almost all known data. Since \rfdiffusionaa requires a fixed ligand structure at the binding state as input, we extend our method to work under its setting. For \method, besides the original setting (\method-N), we also train a version of our model with the pairwise distance matrix of the bound structure as an auxiliary hint input (\method-H). This version still needs to generate the ligand structure itself, rather than rely on a fixed structure, as other specifications is not modified.

\textbf{Evaluation Set.}~We mainly evaluate all methods on a selected ligand set (evaluation set) from \rfdiffusionaa (FAD, SAM, IAI, OQO). The evaluation set comprises ligands from inside and outside the training set, with both long and short lengths. We conduct evaluation on an extended ligand set (extended set, see Appendix~\ref{appx:eval}) to further demonstrate the performance of \method.
\vspace{-5pt}
\subsection{Self-consistency and Conformer Legitimacy}
\label{sec:experiment:sc}

In this section, we evaluate the legitimacy of the generated protein structure by self-consistency RMSD and the predicted ligand structure at binding state by detecting structural violence in the conformer. Legitimacy is crucial in binder design, given that the model output is not guaranteed to be valid, while a design with higher legitimacy is more possible to fold as expected.

\begin{figure}[htbp]
\includegraphics[width=\textwidth]{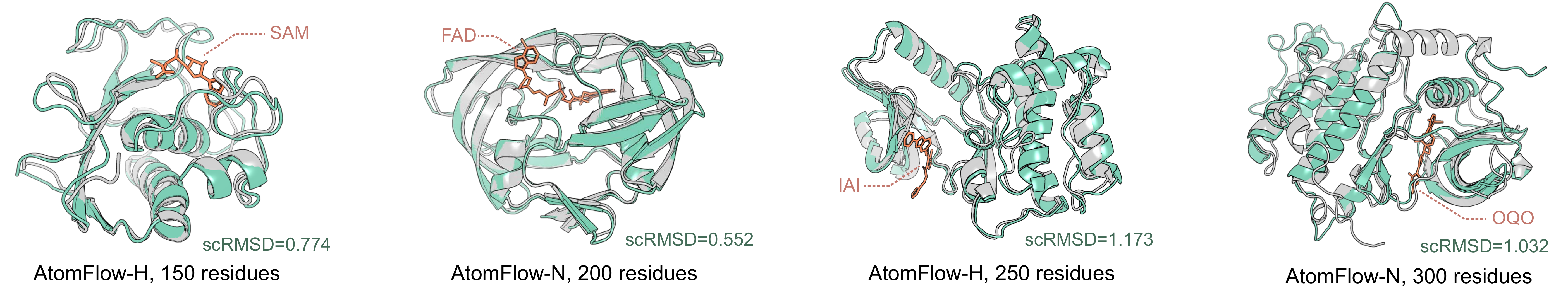}
\caption{Designed structures for different ligands at different lengths. We align the ESMFold predicted structure to the designed structure, and report the scRMSD metric. Green: designed protein; Orange: designed ligand conformer; Grey: ESMFold predicted protein. }
\label{fig:design_sample}
\end{figure}

\textbf{Protein Structure.} For protein structures, self-consistency RMSD is widely adopted as a metric to evaluate their legitimacy~\citep{lin2023generating, watson2023novo}, which compares the generated structure and the folding of its sequence predicted by an accurate model. We adopt LigandMPNN~\citep{dauparas2023atomic} to predict possible sequences from the generated structures. We first generate 8 sequences for all designed structure with LigandMPNN, then predict the corresponding protein structure with ESMFold~\citep{lin2023evolutionary}, and the metric for each generated structure is calculated as the minimum rooted mean squared distance between the designed structure and predicted structure (scRMSD). For each ligand in the evaluation set, we generates 10 structures for lengths in [100, 150, 200, 250, 300]. The results are shown in Table~\ref{table:sc}. We illustrate several generated samples in Figure~\ref{fig:design_sample}, and the cumulative distribution of scRMSD among them in Figure~\ref{fig:sc-and-vina}A and \ref{fig:sc-and-vina}D. The results on the extended set are shown in Appendix~\ref{appx:eval}.

\begin{table}[h]
\begin{center}
\begin{tabular}{cccccc}
\hline
    Method & Overall & SAM & FAD & IAI & OQO \\
    \hline
    \method-H & \textbf{0.57} & \textbf{0.60} & 0.36 & \textbf{0.58} & \textbf{0.74} \\
    \method-N & 0.50 & 0.50 & 0.38 & \textbf{0.58} & \textbf{0.54} \\
    RFDiffusionAA & 0.52 & \textbf{0.60} & \textbf{0.58} & 0.48 & 0.42 \\
    RFDiffusion & 0.33 & 0.04 & 0.50 & 0.44 & 0.32 \\
\hline
\end{tabular}
\end{center}
\vspace{-5pt}
\caption{Proportion of samples with scRMSD $< 2$ on the evaluation set (higher is better).}
\label{table:sc}
\end{table}
\vspace{-5pt}

\method and \rfdiffusionaa outperforms \rfdiffusion on all ligands in the evaluation set, while both \method-H and \method-N reach comparable results to \rfdiffusionaa, and  exhibits advantages over \rfdiffusionaa on several cases. The restricted performance of \rfdiffusion is as expected since its binding potential for guiding the protein-ligand interaction may lead to structural destruction. Both \method and \rfdiffusionaa models the interaction directly, thus not requiring a strong potential to interfere the generative process, and lead to better generation results. Notably, without relying on structural guidance from the input ligand conformer, \method-N achieves close performance to \method-H, thereby successfully augmenting the setting to flexible design.

\vspace{-5pt}
\subsection{Binding Affinity}
\label{sec:experiment:bf}

\begin{figure}[t]
\begin{center}
\includegraphics[width=1.0\textwidth]{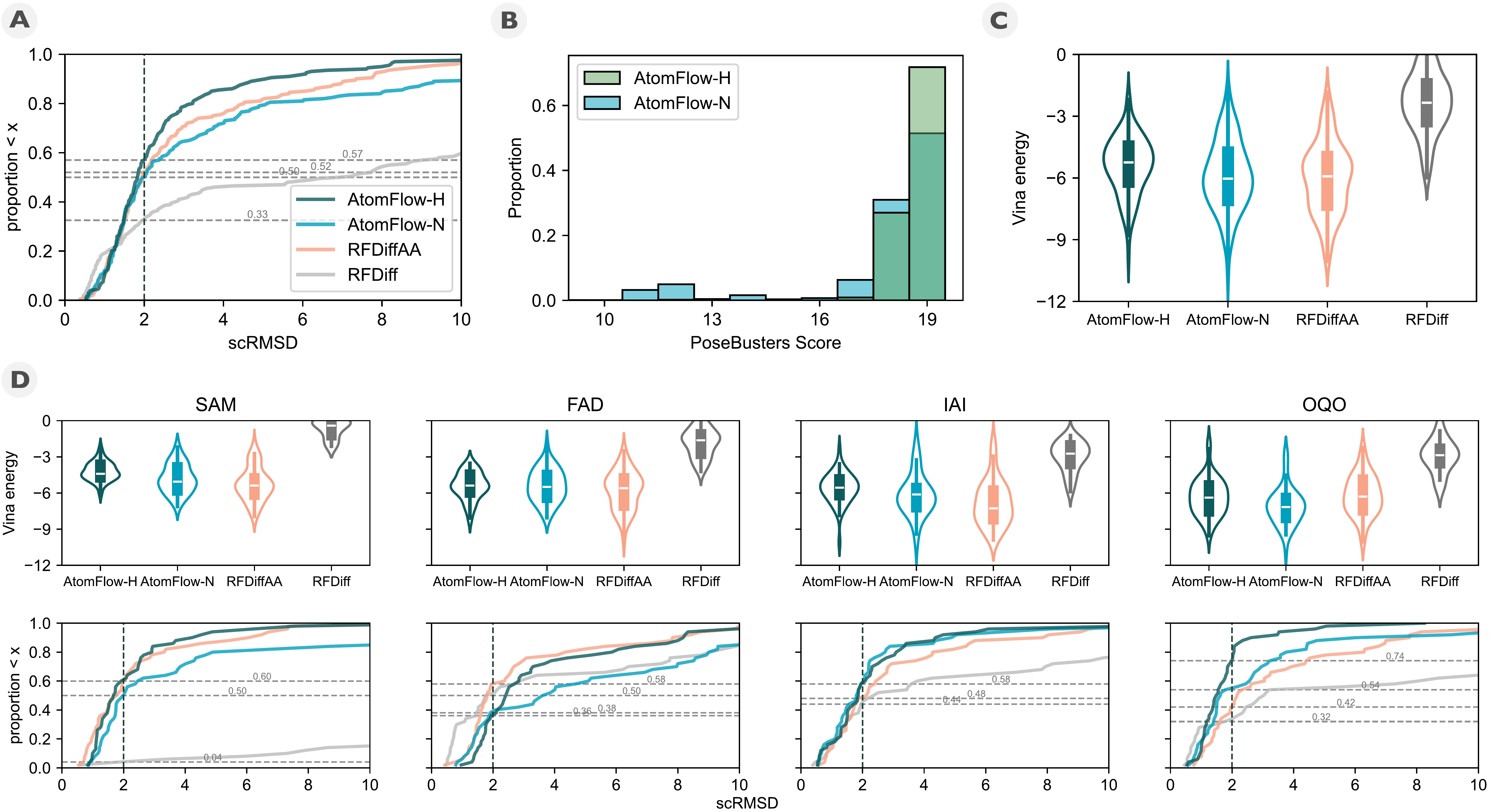}
\end{center}
\vspace{-5pt}
\caption{\textbf{A}: Self-consistency RMSD distribution curve demonstrating the ratio among all designed samples for the evaluation set with $\text{scRMSD} \le x$ (higher is better). \method outperforms \rfdiffusion with a curve similar to \rfdiffusionaa. \method-H generates achieves the best result among the methods. The ratio of samples with $\text{scRMSD}<2$ is highlighted. \textbf{B}: PoseBusters score distribution of \method generated samples on the extended set. Most ligand conformers generated by \method-N only fails $\le 1$ metric of its evaluations. \textbf{C}: Vina score distribution over all designs on the evaluation set (lower is better). \method achieves comparable performance to \rfdiffusionaa, outperforming \rfdiffusion. \textbf{D}: scRMSD curve and Vina energy distribution over designs for each ligand in the evaluation set. \method outperforms \rfdiffusion on all cases and metrics. \method and \rfdiffusionaa each exhibit advantages on different ligands, with comparable overall results.}
\vspace{-5pt}
\label{fig:sc-and-vina}
\end{figure}

In this section, we evaluate the binding affinity of the designed protein binder by calculating an energy function for the atom-level interaction between the protein and the ligand. Binding affinity is the key metric to reveal whether the designed binders are able to bind the target molecule. Though the real binding affinity could only be determined through experiments in the wet lab, an energy function is usually adopted as an in silico alternative~\citep{zhang2024pocketgen}. We calculate the AutoDock Vina Score~\citep{eberhardt2021autodock} for all 8 sequences packed by the Rosetta packer~\citep{leaver2011rosetta3}, and the reported energy for a structure is the minimum score among all packed proteins. We calculate the energy for all generated structures for the selected ligand set in Section~\ref{sec:experiment:sc} and compare \method with \rfdiffusionaa and \rfdiffusion. The result is illustrated in Figure~\ref{fig:sc-and-vina}C and \ref{fig:sc-and-vina}D. 

We find that the binding affinity of \rfdiffusion is quite poor since it does not model the protein-ligand interaction directly. \method has reached comparable binding affinity to \rfdiffusionaa, though marginally lower on several cases. We attribute this to the exhaustive training process of \rfdiffusionaa on all known data in PDB, while \method could be further trained and this will be investigated further in our future work. The minimum binding energy generated by \method-H is slightly higher than that of \method-N, possibly because the provided conformer hint hinders the model from exploring additional binding states.

\begin{figure}[htbp]
\begin{center}
\includegraphics[width=\textwidth]{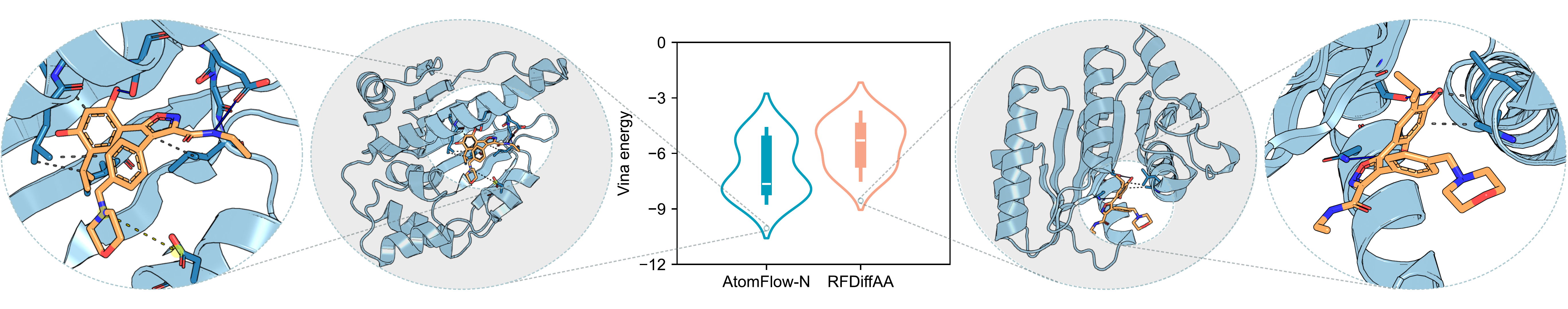}
\end{center}
\caption{\method-N designs binders with lower vina energy distribution than \rfdiffusionaa on 2GJ without the bound structure. Illustration of one sample for each method with PLIP demonstrates that the \method-N designed binder has more chemical interaction with the ligand. }
\label{fig:vina-abl}
\vspace{-5pt}
\end{figure}

We further compare \method with \rfdiffusionaa on a realistic setting where the bound conformer is unknown. We set the target ligand as luminespib (PDB id: 2GJ), an Hsp90 inhibitor~\citep{piotrowska2018activity}. A designed protein binder for luminespib may act as a protein drug carrier to enhance drug efficacy. Luminespib is a molecule ligand with 33 heavy atoms, so that the conformer is quite flexible when docked to different receptors. We design 10 binders for luminespib using \method and \rfdiffusionaa. The ideal conformer from PDB is provided to \rfdiffusionaa, while no conformer is provided to \method. The binding energy of the designed structures and one designed sample with PLIP~\citep{adasme2021plip} to demonstrate the protein-ligand interaction are illustrated in Figure~\ref{fig:vina-abl}. It is shown that \method generates more binders with higher binding affinity than \rfdiffusionaa, and significantly outperforms \rfdiffusionaa on the lowest energy among all generated structures. This demonstrate that a proper bound structure is crucial to the performance of \rfdiffusionaa, while \method does not rely on such structure and generates proper conformers by co-modelling the structure space of proteins and ligands.
\vspace{-5pt}
\subsection{Diversity and Novelty}
\label{sec:experiment:dv}

In this section, we report the diversity and novelty of \method, following common practice in literature~\citep{krishna2024generalized, yim2023se}. Diversity refers to the structural divergence of the designed binders for a certain ligand, while novelty refers to how close a designed protein is to the known proteins. For diversity, we generate 100 structures with 200 residues for each ligand, and then use MaxCluster~\citep{maxcluster} to calculate pairwise structural distance of the outputs and report the number of clusters using different thresholds of maximum distance within cluster. For novelty, we generate 4 structures with residue count in $[100, 101, \cdots, 300]$ for each ligand, and then calculate the highest TM-score~\citep{zhang2005tmalign} between a designed structure and any similar structure searched by FoldSeek~\citep{kempen2024fast} (pdbTM), as well as the protein scRMSD. The search range of pdbTM is all known protein structures in PDB. 

\begin{figure}[htb]
\centering
\includegraphics[width=0.8\textwidth]{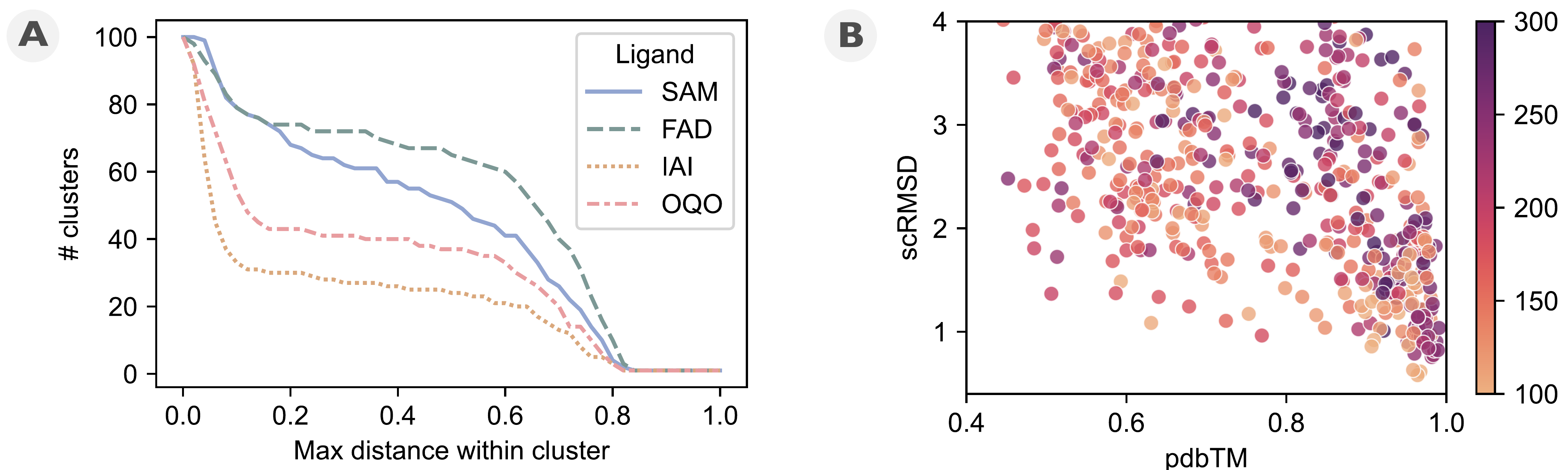}
\caption{\textbf{A}: Cluster count based on different thresholds for maximum difference within cluster for each ligand in the evaluation set. \method generates diverse binder folds for all ligands, not restricted to the existing binder structure. \textbf{B}: Scatter plot of designability (scRMSD) vs. novelty (pdbTM) for ligands in the evaluation set. \method successfully designs self-consistent structures with high pdbTM, demonstrating high novelty.}
\label{fig:cluster-pdbtm}
\end{figure}

\vspace{-10pt}
Figure~\ref{fig:cluster-pdbtm}A shows that the structures generated by \method is quite diverse for all four ligands, and the diversity varies among different ligands. Though existing protein-ligand complexes only provides limited folds for possible binders, by adding protein-only data to the training set, our model successfully learns from the protein structure distribution to generate more possible folds, instead of replicating known patterns. The scatter plot of scRMSD vs. pdbTM shown in Figure~\ref{fig:cluster-pdbtm}B reveals that \method has the ability to generate structures that are quite different from existing proteins with acceptable designability. Note that most designable structures are still similar to known ones, which is as expected since most protein folds are already discovered, while novel folds are quite sparse and hard to derive.

\vspace{-5pt}
\section{Conclusion and Future Work}
\label{sec:conclusion}
\vspace{-5pt}

In this work, we proposed \method, a de novo protein binder design method for small molecule ligands considering the flexibility of ligand structure. Unlike previous works, \method no longer relies on a given bound ligand conformer as input. We represent the protein-ligand complex as unified biotokens, learning the structure distribution of both the proteins and the ligands simultaneously from the data with an SE(3)-equivariant flow matching model on the representative atoms. During evaluation, \method shows comparable design quality to the state-of-the-art model \rfdiffusionaa, which requires the ligand conformer to be fixed before design. Further evaluation exhibits the advantage of \method at the circumstance when the ligand conformer is not known. A direct future work is to support more precise control of the generated structures, and we're working to migrate \method to all kinds of biomolecules, including DNA, RNA, and metal ions.

\bibliography{reference.bib}
\bibliographystyle{iclr2025_conference}

\clearpage
\appendix
\section{Appendix}
\subsection{Protein Frames}
\label{appx:frame}

Proteins are composed of amino acid chains linked by peptide bonds, forming a backbone with protruding side chains. Each amino acid’s position and orientation is described by a local coordinate system, or protein frame, centered on three key backbone atoms: the alpha carbon (C$\alpha$), the carbonyl carbon (C), and the amide nitrogen (N). These atoms act as reference points for establishing the frame. The alpha carbon (C$\alpha$) typically acts as the origin. The vector from C$\alpha$ to the amide nitrogen (N) is normalized to define one axis of the frame. A second axis is defined by the normalized vector from C$\alpha$ to the carbonyl carbon (C). The third axis is formed by the cross product of these two vectors, creating an orthogonal, right-handed coordinate system. The residue frame is typically represented as an SE(3) transformation \( T = (R, t) \), which maps a vector from this local system to the global coordinate system. In this transformation, \( t \) corresponds to the position of C$\alpha$ in the global system, and \( R \) represents the rotation needed to align the residue’s structure within the global context.

\begin{figure}[h]
\centering
\includegraphics[width=0.4\textwidth]{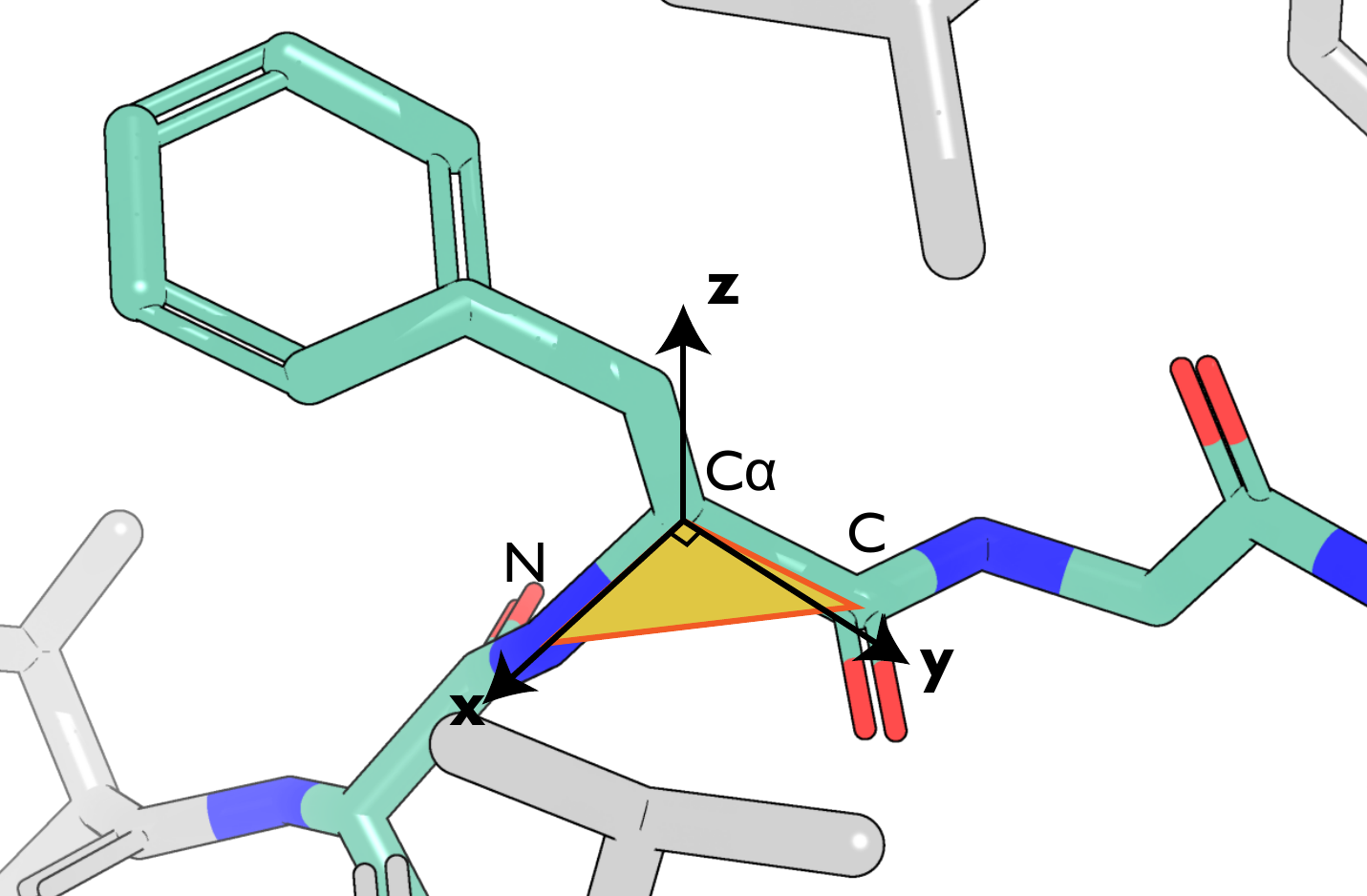}
\caption{A protein frame illustration. The C$\alpha$, C, N atoms form a panel, which is the xy panel. The x axis is defined as the orientation from C$\alpha$ to N, while the y axis is on the panel and perpendicular to the x axis. The z axis is perpendicular to the xy panel.}
\end{figure}

\subsection{Details on Biotokens}
\label{appx:token}

\paragraph{Token Features}

For ligand atom tokens, the token-level feature set includes: chirality, degree, formal charge, implicit valence, number of H atoms, number of radical electrons, orbital hybridisation, aromaticity, ring size. The pair-level feature is provided as one-hot embedding of the bond type. For residue tokens, no token-level feature is known, while the pair-level feature only contains the binned distance of residue index between residues. All features are encoded as a one-hot vector and concatenated.

\paragraph{Token Frames}

The final loss we adopted $\mathcal{L}_{\text{CFM-FAPE}}$ requires aligning the predicted structure to the local frame of every token. The frames of protein residues can be naturally defined as in Section \ref{sec:preliminary}. However, the frames of ligand atoms could not be choosed directly. Since a frame could be calculated from the coordinate of 3 atoms, we need to choose an atom triplet for every atom token. 

We first obtain an canonical rank of every atom that does not depend on the input order \citep{schneider2015get}. The atoms are then renamed to its rank. For atoms $ x $ with a degree greater than or equal to 2, we select the lexicographically smallest triplet $ (u, x, v) $ to define the frame, where $ u $ and $ v $ are neighbors of $ x $ . For atoms with a degree of 1, $ u $ is the only neighbor of $ x $ , and $ v $ is chosen as one of $ u $ 's neighbors. This method ensures that each atom's frame is defined in a consistent manner, irrespective of its position in the input sequence, thereby facilitating the model to learn a consistent structural target.

\paragraph{Extending Token Types and Features}

Though \method only considers the interaction between protein and molecule ligands, the unified biotoken has the potential to extend to all biological entities, including DNA, RNA, etc, by defining the token position, token frame, local and pair features, and the representation of the internal structure. For example, an RNA can be represented as a sequence of nucleotides, with the token position defined as its mass center, and the token frame calculated from an atom triplet, such as C2-N1-C6. 

The token features can also be extended to support more types of known information. For example, the local features could also contain an embedding to indicate preferred secondary-structure, or whether a ligand atom is required to be closer to the designed protein; the pair features could also contain the motif information with a distance map.

\subsection{Details on the Flow Matching Process}
\label{appx:flow}

For all types of tokens, we only consider their token positions to simplify the flow matching process. Thus, the positions of all tokens lies in the euclidean space $\mathbb{R}^{N\times 3}$. Since a complex could be arbitrarily moved or rotated in the coordinate space without changing its structure, we need an algorithm that treats different position series as the same if they could be aligned with an SE(3) translation. Thus, every data point we consider now lies in the quotient space $\mathbb{R}^{N\times 3}/\text{SE(3)}$. This quotient space is proved to be a Riemannian manifold~\citep{diepeveen2024riemannian}. 

For a Riemannian manifold, the flow matching process could be defined using a premetric \citep{chen2024flow}. A premetric $d: \mathcal{M}\times \mathcal{M}\rightarrow \mathbb{R}$ should satisfy: 1. $d(x,y)\ge 0\text{ for all }x,y\in \mathcal{M}$; 2. $d(x,y)=0\text{ iff }x=y$; 3. $\nabla d(x,y)\neq 0\text{ iff }x\neq y$.

We define our premetric as the minimum point-wise rooted sum of squared distance (RMSD) among all pairs of possible structures in the original space $\mathbb{R}^{N\times 3}$ for two elements in the quotient space \(
    d(x,y) = \|\text{align}_x(y)-x\| \), which satisfies all three conditions (Proposition~\ref{prop:premetric}).

\begin{proof}
Since the premetric is defined as a norm, it satisfies condition 1 by nature. When $x=y$, the best alignment that aligns $y$ to $x$ could derive the exact same position as $x$, yielding an zero norm. When $x\neq y$, when $y$ is aligned to $x$, there's still structural difference between the structures, thus the premetric is not zero. For condition 3, by defining $y'=\text{align}_x(y)$, we have 
\begin{equation}
\label{equation:nabla_d}
    \nabla d(x,y)=\nabla \sqrt{\sum_{i=1}^{n} (y_i'-x_i)^2} =\dfrac{y'-x}{||y'-x||}=\dfrac{\text{align}_x(y)-x}{||\text{align}_x(y)-x||}\ge 0.
\end{equation}
Thus $d(x,y)$ satisfies all the conditions as a qualified premetric.
\end{proof}

With such premetric, and a monotonically decreasing differentiable scheduler $\kappa(t)=1-t$, we could obtain a well-defined conditional vector field that linearly interpolates between the noisy and real data
\citep{chen2024flow}
\begin{equation}
\label{equation:cvf-apx}
    u_t(x|x_1)=\dfrac{\mathrm{d}\log \kappa(t)}{\mathrm{d}t}d(x,x_1) \dfrac{\nabla d(x,x_1)}{\|\nabla d(x,x_1)\|^2}=\dfrac{1}{1-t}(\text{align}_x(x_1)-x).
\end{equation}
The vector field in equation \ref{equation:cvf-apx} is calculated by substituting equation \ref{equation:nabla_d} into the left side.
This vector field provides the direction for moving straight towards $x_1$, and generates a probability flow that interpolate linearly between noisy sample $x_0$ and data sample $x_1$.

Since the vector field is defined as a function of $x_1$, we could learn the vector field with a structure prediction model $\hat{x_1}(x,t;\theta)$. By substituting equation \ref{equation:cvf} into equation \ref{equation:cfm}, we obtain the training loss
\begin{equation}
    \mathcal{L}_{\mathrm{CFM}}(\theta) = \mathbb{E}_{t, p_{\text{data}}(x_1), p_t(x|x_1)}\left\|\dfrac{1}{1-t}(\text{align}_{x}(\hat{x_1}(x,t;\theta))-\text{align}_x(x_1))\right\|.
\end{equation}

\subsection{Details on the Prediction Network}
\label{appx:model}

\paragraph{Structure Module Specifications}

The main components of the structure module is derived from Alphafold 2~\citep{jumper2021highly}, while our implementation builds on top of the widely acknowledged reimplementation OpenFold~\citep{ahdritz2024openfold}. The TransformerStack consists of 14 layers of simplified Evoformer block, and the IPAStack consists of 4 layers of Invariant Point Attention (IPA) blocks. The MSA operations in the Evoformer block is simplified by replacing the operations on the MSA feature matrix to the single representation $s_i$. The weights of the IPA blocks are shared, and the structural loss is calculated on the outputs of each block and averaged.

\paragraph{Training Details}
During training, we equally sample data from the SCOPe dataset (v2.08) and the PDBBind dataset (2020). We simply drop the data with more than 512 tokens, and we don't crop the filtered complexes since the cutoff is large enough and only filters out a relative small portion of data. We train our model on 10 NIVIDA RTX 4090 acceleration card, with a batch size set to 10, which means the batch size on each device is set to 1. We use the Adam Optimizer~\citep{kingma2014adam} with a weight-decaying learning rate scheduler, starting from $10^{-3}$ and decays the learnign rate by $0.95$ every 50k steps. We seperate the training process into two stages: 1) initial training, $\alpha_1=0.5, \alpha_4=0.3, \alpha_2=\alpha_3=0$; 2) finetuning, $\alpha_1=\alpha_2=\alpha_3=0.5, \alpha_4=0.3$. 

\paragraph{Loss Function}
$\mathcal{L}_\text{CFM}$ calculates an aligned RMSD by aligning $\mathbf{x_1}$ and $\hat{\mathbf{x_1}}$ to $\mathbf{x}$, while the FAPE loss calculates an averaged RMSD by aligning $\mathbf{\hat{x_1}}$ to each residue frame of $\mathbf{x_1}$, which could be extended to the token frame (Appendix~\ref{appx:token}). Let $\text{align}_{x,i}(y)$ denote aligning $y$ to the $i$-th token frame of $x$, we have
\begin{equation*}
\begin{aligned}
    \mathcal{L}_\text{CFM} &= \mathbb{E}_{t, p_{\text{data}}(x_1), p_t(x|x_1)}\left\|\dfrac{1}{1-t}(\text{align}_{x}(\hat{x_1}(x,t;\theta))-\text{align}_x(x_1))\right\|\\
    &\approx \mathbb{E}_{t, p_{\text{data}}(x_1), p_t(x|x_1)}\left\|\dfrac{1}{1-t} \cdot \dfrac{1}{N}\sum_{i=1}^{N}\left(\text{align}_{x,i}(\hat{x_1}(x,t;\theta))-\text{align}_{x,i}(x_1))\right)\right\|\\
    &\approx \mathbb{E}_{t, p_{\text{data}}(x_1), p_t(x|x_1)}\left\|\dfrac{1}{1-t}\cdot \dfrac{1}{N}\sum_{i=1}^{N}\left(\text{align}_{x_1,i}(\hat{x_1}(x,t;\theta))-\text{align}_{x_1,i}(x_1)\right)\right\|\\
     &\approx \mathbb{E}_{t, p_{\text{data}}(x_1), p_t(x|x_1)}\left\|\dfrac{1}{1-t}\cdot \dfrac{1}{N}\sum_{i=1}^{N}\left(\text{align}_{x_1,i}(\hat{x_1}(x,t;\theta))-x_1\right)\right\|\\
     &=\mathcal{L}_\text{CFM-FAPE}
\end{aligned}
\end{equation*}

\begin{proposition}
    $\text{align}_\mathbf{x_1}(\hat{\mathbf{x_1}})=\mathbf{x_1} \iff \mathcal{L}_\text{CFM}=0\iff \mathcal{L}_\text{CFM-FAPE}=0$. 
\end{proposition}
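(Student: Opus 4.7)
The plan is to establish the three-way equivalence by reducing both loss conditions to the single statement that $\hat{\mathbf{x}}_1$ and $\mathbf{x}_1$ coincide as elements of the quotient space $\mathcal{Q} = \mathbb{R}^{N\times 3}/\mathrm{SE}(3)$, which, by the definition of $\text{align}$ as the RMSD minimizer in Equation~\ref{equation:premetric}, is precisely the condition $\text{align}_{\mathbf{x}_1}(\hat{\mathbf{x}}_1) = \mathbf{x}_1$ with zero residual. The forward direction in each case uses non-negativity of the integrand; the reverse direction is immediate once SE(3)-equivalence is established.

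First I would observe that both losses are expectations of non-negative quantities over $(t, \mathbf{x}_1, \mathbf{x})$, so each vanishes iff its integrand does so $p_\text{data}\otimes p_t$-almost surely in $(\mathbf{x}_1,\mathbf{x})$ and for almost every $t \in [0,1)$. The $1/(1-t)$ prefactor is strictly positive on $[0,1)$ and can be dropped. Hence $\mathcal{L}_\text{CFM}=0$ reduces to the pointwise identity $\text{align}_{\mathbf{x}}(\hat{\mathbf{x}}_1) = \text{align}_{\mathbf{x}}(\mathbf{x}_1)$, while $\mathcal{L}_\text{CFM-FAPE}=0$ reduces to $\text{align}_{\mathbf{x}_1,i}(\hat{\mathbf{x}}_1) = \mathbf{x}_1$ for every token index $i$.

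Next I would argue that each of these pointwise conditions is equivalent to SE(3)-equivalence of $\hat{\mathbf{x}}_1$ and $\mathbf{x}_1$. For the CFM condition, the two sides are SE(3) images of $\hat{\mathbf{x}}_1$ and $\mathbf{x}_1$ respectively; their equality exhibits an SE(3) element $\tau$ with $\tau(\hat{\mathbf{x}}_1) = \mathbf{x}_1$, i.e.\ $\hat{\mathbf{x}}_1 \sim \mathbf{x}_1$ in $\mathcal{Q}$. For the FAPE condition, even a single per-token alignment $\text{align}_{\mathbf{x}_1,i}(\hat{\mathbf{x}}_1) = \mathbf{x}_1$ exhibits such a map directly. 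Conversely, if $\hat{\mathbf{x}}_1 = \tau(\mathbf{x}_1)$ for some $\tau \in \mathrm{SE}(3)$, then both optimal alignments return $\mathbf{x}_1$ with zero residual, and both integrands vanish. Chaining these two equivalences through the common SE(3)-equivalence yields the desired biconditional chain.

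The main obstacle I anticipate is the measure-theoretic handling of alignment non-uniqueness. The map $\text{align}_{\mathbf{x}}$ is well-defined generically (by the Kabsch procedure, up to a reflection choice) but can be multi-valued on highly symmetric or low-rank configurations. I would handle this by noting that the prior $q$ is Gaussian on $\mathbb{R}^{N\times 3}$ and the interpolation is linear, so $p_t(\mathbf{x}\mid\mathbf{x}_1)$ admits a density with respect to Lebesgue measure; the degenerate locus therefore has $p_t$-measure zero and can be excluded from the almost-sure claims. A secondary subtlety is that the $\mathcal{L}_\text{CFM-FAPE}$ appearing in Equation~\ref{equation:cfm.fape} is the approximated form derived in Appendix~\ref{appx:model}; the equivalence of zero sets still holds because each per-token summand alone already witnesses SE(3)-equivalence, so no further approximation argument is required.
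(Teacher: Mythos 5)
Your proof is correct and follows essentially the same route as the paper's: reduce both zero-loss conditions to SE(3)-equivalence of $\hat{\mathbf{x}}_1$ and $\mathbf{x}_1$ (equivalently $\text{align}_{\mathbf{x}_1}(\hat{\mathbf{x}}_1)=\mathbf{x}_1$), with the forward direction by direct substitution and the converse by non-negativity of the integrand. Your treatment is in fact somewhat more careful than the paper's, since you explicitly handle the expectation-versus-integrand distinction, the fact that $\mathcal{L}_\text{CFM}$ aligns to $\mathbf{x}$ rather than $\mathbf{x}_1$, and the measure-zero locus where the aligner is non-unique.
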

\begin{proof}
When $\text{align}_{\mathbf{x_1}}(\hat{\mathbf{x_1}}) = \mathbf{x_1}$, we have $\forall i, \text{align}_{\mathbf{x_1}, i}(\hat{\mathbf{x_1}}) = \mathbf{x_1}$. As a result, $\mathcal{L}_\text{CFM} = \mathcal{L}_\text{CFM-FAPE} = 0$. This establishes that:
\begin{equation}
\label{equation:appx:loss_1}
\text{align}_{\mathbf{x_1}}(\hat{\mathbf{x_1}}) = \mathbf{x_1} \iff \mathcal{L}_\text{CFM} = 0 \quad \text{and} \quad \text{align}_{\mathbf{x_1}}(\hat{\mathbf{x_1}}) = \mathbf{x_1} \iff \mathcal{L}_\text{CFM-FAPE} = 0.
\end{equation}
Now, assume $\mathcal{L}_\text{CFM} = 0$. Suppose $\text{align}_{\mathbf{x_1}}(\hat{\mathbf{x_1}}) \neq \mathbf{x_1}$. Then for all transformations $R$ and $t$, we have $R\hat{\mathbf{x_1}} + t \neq \mathbf{x_1}$, which implies: \(
\|\text{align}_{\mathbf{x_1}}(\hat{\mathbf{x_1}}) - \mathbf{x_1}\| \neq 0\), leading to $\mathcal{L}_\text{CFM} \neq 0$. This is a contradiction. Therefore, $\text{align}_{\mathbf{x_1}}(\hat{\mathbf{x_1}}) = \mathbf{x_1}$. This proves that
\begin{equation}
\label{equation:appx:loss_2}
\mathcal{L}_\text{CFM} = 0 \iff \text{align}_{\mathbf{x_1}}(\hat{\mathbf{x_1}}) = \mathbf{x_1}.
\end{equation}
Similarly, assume $\mathcal{L}_\text{CFM-FAPE} = 0$. Suppose $\text{align}_{\mathbf{x_1}}(\hat{\mathbf{x_1}}) \neq \mathbf{x_1}$. Then:
\(
\|\text{align}_{\mathbf{x_1}, i}(\hat{\mathbf{x_1}}) - \mathbf{x_1}\| \neq 0,
\)
which leads to $\mathcal{L}_\text{CFM-FAPE} \neq 0$, again a contradiction. Therefore, $\text{align}_{\mathbf{x_1}}(\hat{\mathbf{x_1}}) = \mathbf{x_1}$. This proves that: 
\begin{equation}
\label{equation:appx:loss_3}
\mathcal{L}_\text{CFM-FAPE} = 0 \iff \text{align}_{\mathbf{x_1}}(\hat{\mathbf{x_1}}) = \mathbf{x_1}.
\end{equation}
The proposition is proved by combining equation \ref{equation:appx:loss_1},\ref{equation:appx:loss_2},\ref{equation:appx:loss_3}.
\end{proof}
This means that both $\mathcal{L}_\text{CFM}$ and $\mathcal{L}_\text{CFM-FAPE}$ provides an optimization direction towards minimizing the SE(3) invariant structural difference between the predicted structure and the ground truth structure.
Thus, we adopt $\mathcal{L}_\text{CFM-FAPE}$ as a realistic approximation of  $\mathcal{L}_\text{CFM}$ and adopt it as the training objection during evaluation.

\subsection{Evaluation Details}
\label{appx:eval}

\paragraph{Specifications}
Following \rfdiffusionaa, we use FAD, SAM, IAI, OQO as the selected evaluation set. FAD and SAM is witnessed by both models as training data, while IAI and OQO is not and demonstrates the generalization ability. To further investigate the performance of our method, we conduct experiments on an extended set of 20 ligands (ligands from PDB id 6cjs, 6e4c, 6gj6, 5zk7, 6qto, 6i78, 6ggd, 6cjj, 6i67, 6iby, 6nw3, 6o5g, 6hlb, 6efk, 6gga, 6mhd, 6i8m, 6s56, 6tel, and 6ffe). The extended dataset includes ligand sizes (including hydrogen) ranging from 21 to 104 in length.

\paragraph{Extended Results}
We illustrate the designability (scRMSD) and binding affinity (Vina energy) of \method-N in Figure \ref{appx:fig:extended-result}. The extended evaluation shows that the performance of \method on the extended set is similar to the evaluation set shown in the main article, and demonstrates that \method is able to tackle with almost all kinds of ligands.

\begin{figure}[h]
\includegraphics[width=\textwidth]{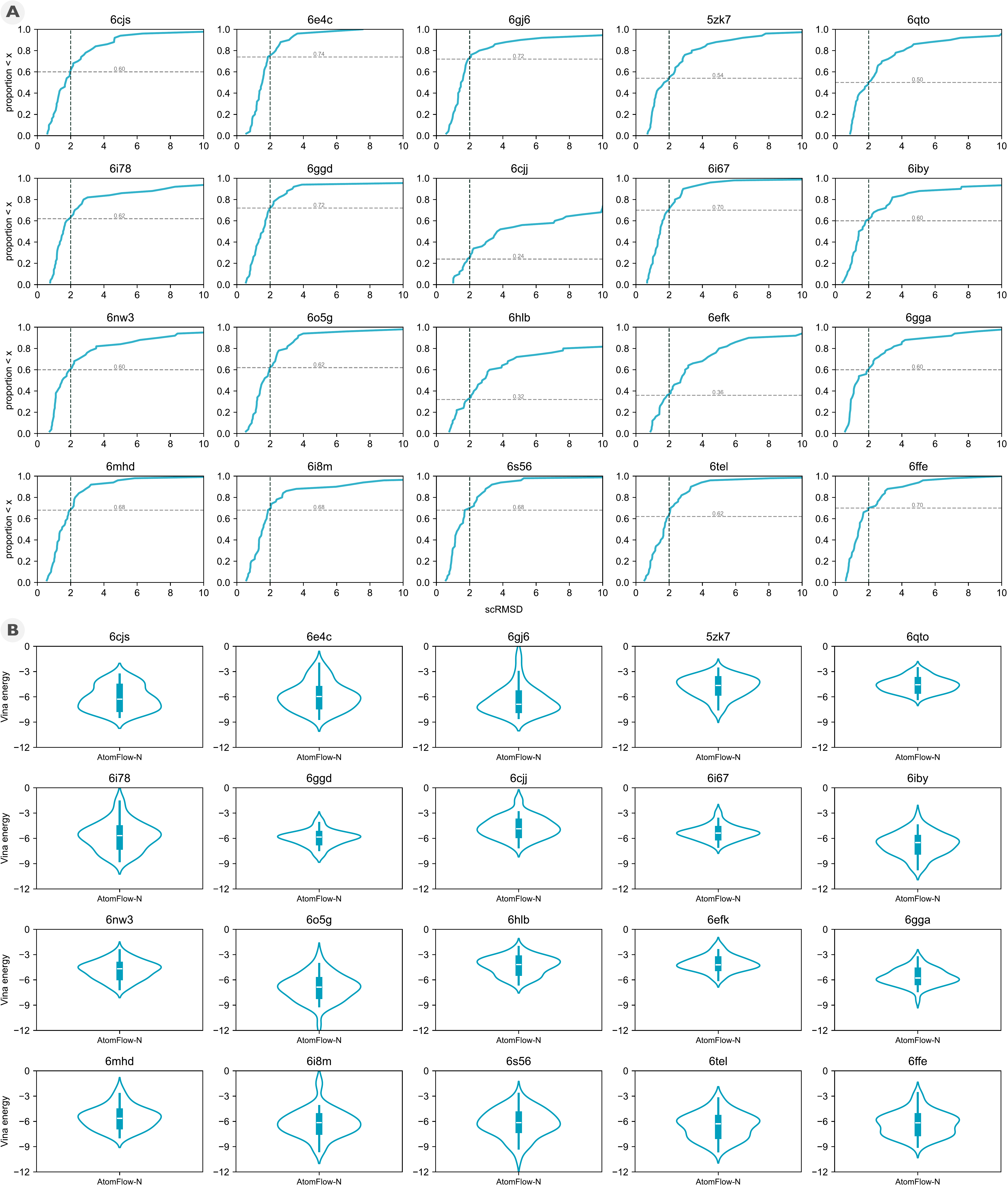}
\caption{A: scRMSD of designs for each ligand in the extended set; B: Vina energy of designs for each ligand in the extended set.}
\label{appx:fig:extended-result}
\end{figure}

\end{document}